\begin{document}

\newtheorem{Definition}{\textbf{Definition}}[]
\newtheorem{mylemma}{\textbf{Lemma}}[]
\newtheorem{myobserve}{\textbf{Observation}}[]
\setlength\parindent{15pt}

\title{Closeness Centrality Algorithms For Homogeneous Multilayer Networks}
\title{Closeness Centrality Detection in Homogeneous Multilayer Networks}

\author{\authorname{Hamza Reza Pavel, Anamitra Roy, Abhishek Santra and Sharma Chakravarthy}
\affiliation {IT Lab, The University of Texas at Arlington, Arlington, Texas 76019}
\email{\{hamzareza.pavel,axr9563,abhishek.santra\}@mavs.uta.edu, sharmac@cse.uta.edu}
}
\abstract{Centrality measures for simple graphs are well-defined and several main-memory algorithms exist for each. Simple graphs have been shown to be not adequate for modeling complex data sets with multiple types of entities and relationships. Although multilayer networks (or MLNs) have been shown to be better suited, there are very few algorithms for centrality measure computation \textit{directly} on MLNs. Typically, they are converted (aggregated or projected) to simple graphs using Boolean AND or OR operators to compute various centrality measures, which is \textit{not only inefficient, but incurs loss of structure and semantics}.\\
In this paper, algorithms have been proposed that compute closeness centrality on a MLN \textit{directly} using a novel decoupling-based approach. Individual results of layers (or simple graphs) of a MLN are used and a composition function is developed to compute the closeness centrality nodes for the MLN. The challenge is to do this efficiently while preserving accuracy of results with respect to the ground truth. However, since these algorithms do not have complete information of the MLN, computing a global measure such as closeness centrality is a challenge. Hence, these algorithms rely on heuristics derived from intuition. The advantage is that this approach lends itself to parallelism and is more efficient compared to the traditional approach. Two heuristics for composition have been presented and their accuracy and efficiency have been empirically validated on a large number of synthetic and real-world-like graphs with diverse characteristics. }

\keywords{Homogeneous Multilayer Networks, Closeness Centrality, Decoupling Approach, Accuracy \& Precision}

\vspace{-50pt}

\maketitle

\section{Introduction}
\label{sec:introduction}
\noindent Closeness centrality measure, a global graph characteristic, defines the importance of a node in a graph with respect to its distance from all other nodes. Different centrality measures have been defined, both local and global, such as degree centrality  \cite{1}, closeness centrality \cite{cohen2014computing}, eigenvector centrality \cite{3}, multiple stress centrality \cite{4}, betweenness centrality \cite{5}, harmonic centrality \cite{6}, and PageRank centrality \cite{7}. 
Closeness centrality defines the importance of a node based on \textit{how close it is to all other nodes} in the graph. 
Closeness centrality can be used to identify nodes from which communication with \textit{all other nodes in the network} can be accomplished in least number of hops. Most of the centrality measures are defined for simple graphs or monographs. Only page rank centrality has been extended to multi-layer networks \cite{centrality_interconnectedMLN,multiplex_pagerank}. 


A multilayer network consists of layers, where each layer is a simple graph consisting of nodes (entities) and edges (relationships) and optionally connected to other layers through inter-layer edges. If one were to model the three social media networks Facebook, LinkedIn, and Twitter, a MLN is a better model as there are multiple edges (connections) between any two nodes (see Figure~\ref{fig:HoMLN-example}.) This type of MLN is categorized as homogeneous MLNs (or HoMLNs) as the set of entities in each layer has a common subset, but relationships in each layer are different. It is also possible to have MLNs where each layer has different types of entities and relationships within and across layers. Modeling the DBLP data set \cite{dblp}  with authors, papers, and conferences need this type of heterogeneous MLNs (or HeMLNs)~\cite{mln_survey}.

\begin{wrapfigure}{l}{0.45\columnwidth}
    \centering
    \includegraphics[width=0.5\columnwidth]{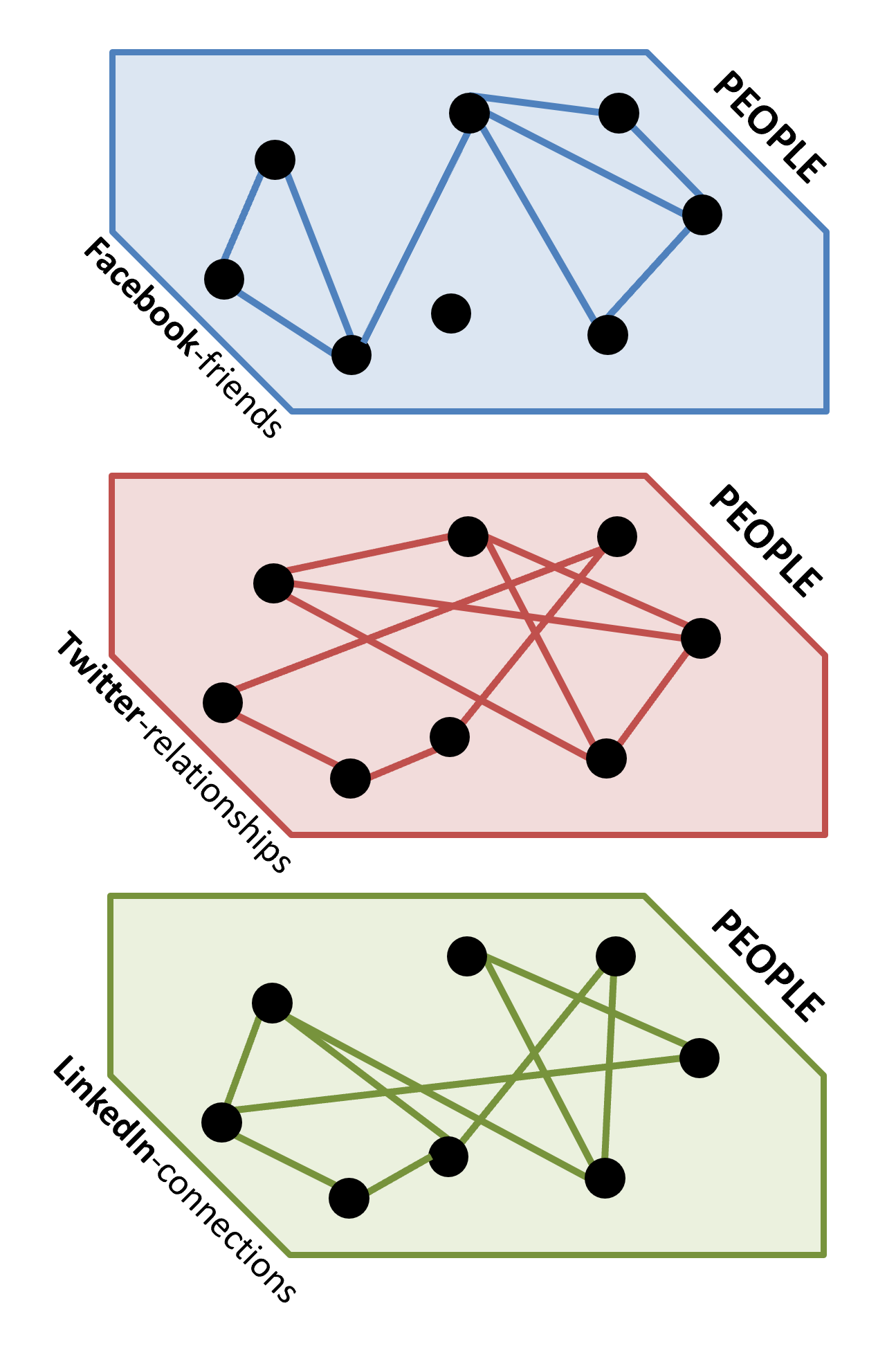}
    \caption{\small HoMLN Example}
    \label{fig:HoMLN-example}
\end{wrapfigure}

This paper presents two heuristic-based algorithms for computing closeness centrality nodes (or CC nodes) on HoMLNs with high accuracy and efficiency. The challenge is in computing a global measure using partitioned graphs (layers in this case) and composing them with minimal additional information to compute a global measure for the combined layers. Boolean AND composition of layers is used for ground truth in this paper (OR is another alternative.) 
These MLN algorithms use the decoupling-based approach proposed in ~\cite{ICCS/SantraBC17,ICDMW/SantraBC17}. Based on this approach, closeness centrality is computed on each layer \textit{once} and minimal additional information is kept from each layer for composing. With this, one can \textit{efficiently} estimate the CC nodes of the MLN. This approach has been shown to be application independent, efficient, lends itself to parallel processing (of each layer), and is flexible for computing centrality measure on any arbitrary subset of layers.

\noindent \textbf{Problem Statement:} Given a homogeneous MLN (HoMLN) with $l$ number of layers --  \textit{G$_1$ (V, E$_1$), G$_2$ (V, E$_2$), ..., G$_l$ (V, E$_l$)}, where $V$ and $E_i$ are the vertex and edge set in the $i^{th}$ layer -- the goal is to identify the closeness centrality nodes of the Boolean AND-aggregated layer consisting of any $r$ layers using the partial results obtained from each of the layers during the analysis step where $r \leq l$. In the decoupling-based approach, the analysis function is defined as the $\Psi$ step and the closeness centrality nodes are estimated in the composition step or the $\Theta$ step using the partial results obtained during the $\Psi$ step.

\subsection{Contributions and Paper Outline}
Contributions of this paper are:
\begin{itemize}
\item Algorithms for computing closeness centrality nodes of MLNs
\item Two heuristics to improve accuracy and efficiency of computed results
\item Use of decoupling-based approach to preserve structure and semantics of MLNs
\item Extensive experimental analysis on large number of synthetic and real-world-like graphs with diverse characteristics.
\item Accuracy and Efficiency comparisons with ground truth and naive approach
\end{itemize}


The rest of the paper is organized as follows: Section \ref{sec:relatedwork} discusses related work. Section \ref{sec:mln_decoupling_approach} introduces the decoupling approach for MLN analysis. Section \ref{sec:mln_algorithm_challenges} provides the challenges of decoupling-based approach for a global metric. Section \ref{sec:closeness_centrality} discusses the challenges in computing the closeness centrality nodes in MLNs. Section \ref{sec:proposedheuristics} describes the proposed heuristics for computing closeness centrality of a MLN. Section \ref{sec:experimental_setup} describes the experimental setup and the data sets. Section \ref{sec:results} discusses result analysis followed by conclusions in Section \ref{sec:conclusions}.


\section{Related Work}
\label{sec:relatedwork}

\noindent Due to the rise in popularity and availability of complex and large real-world-like data sets, there is a critical need for modeling them as graphs and analyzing them in different ways. The centrality measure of MLN provides insight into different aspects of the network. Though there have been a plethora of studies in centrality detection for simple graphs, not many studies have been done on detecting central entities in multilayer networks. Existing studies conducted on detecting central entities in multilayer networks are \textit{use-case specific} and no common framework exists which can be used to address the issue of detecting central entities in a multilayer network.

In \cite{cohen2014computing} proposes an approach to find top k closeness centrality nodes in large graphs. This approximation-based approach has higher efficiency under certain circumstances. Even though the algorithm works on large graphs, unfortunately, it does not work in the case of multi-layer networks. 

In~\cite{centrality_interconnectedMLN}, authors capitalize on the tensor formalism, recently proposed to characterize and investigate complex topologies, to show how closeness centrality and a few other popular centrality measures can be extended to multiplexes. 
The authors in this study \cite{random_walk_centrality} also rely on tensor formalism to investigate and analyze complex multilayer networks. They also extend the random walk closeness centrality to identify nodes in a multilayer network. 

In~\cite{cohen2014computing}, authors propose a sampling-based approach to estimate the closeness centrality nodes with acceptable accuracy. The proposed approach takes linear time and space but it is a \textit{main memory-based algorithm}. The proposed approach works on both undirected and directed graphs. 
The authors of~ \cite{6691611} propose an incremental algorithm that can dynamically update the closeness centrality nodes of a graph in case of edge insertion and deletion. The algorithm has a lower memory footprint compared to traditional closeness centrality algorithms and provides massive speedup when tested on real-world-like graph data sets. 
In~ \cite{du2015new}, the authors propose closeness centrality algorithms where they use effective distance instead of the conventional geographic distance and binary distance obtained by Dijkstra's shortest path algorithm. This approach works on directed, undirected, weighted, and unweighted graphs. 
In~ \cite{9073040}, authors propose an approach to compute the exact closeness centrality values of all nodes in dynamically evolving directed and weighted networks. The proposed approach is parallelizable and achieves a speedup of up to 33 times. 

Most of the methods to calculate closeness centrality are \textbf{main memory-based and not suitable for large graphs}. In \cite{ICDMW/SantraBC17}, the authors propose a decoupling-based approach where each layer can be analyzed independently and in parallel and calculate graph properties for a HoMLN using the information  obtained for each layer.  The proposed algorithms are based on the network decoupling approach which has been shown to be efficient, flexible, scalable as well as accurate. 



\section{Decoupling Approach For MLNs}
\label{sec:mln_decoupling_approach}

\noindent Most of the algorithms available to analyze simple graphs for centrality, community, and substructure detection cannot be used for MLN analysis \textit{directly}. There have been some studies that extend existing algorithms for centrality detection (e.g., page rank) to MLNs \cite{centrality_interconnectedMLN}, but they try to \textbf{work on the MLN as a whole}. The network decoupling approach~\cite{ICCS/SantraBC17,ICDMW/SantraBC17} used in  this paper not only uses extant algorithms for simple graphs, but also uses a partitioning approach for efficiency, flexibility, and new algorithm development.

Briefly, existing approaches for multilayer network analysis convert or transform a MLN into a single graph. This is done either by aggregating or projecting the network layers into a single graph. For homogeneous MLNs, edge aggregation is used to aggregate the network into a single graph. 
Although aggregation of a MLN into a single graph allows one to use extant algorithms (and there are many of them), due to aggregation, \textbf{structure and semantics of the MLN is not preserved resulting in information loss}. 


The network decoupling approach is shown in Figure~\ref{fig:decoupling}.
It consists of identifying two functions: one for analysis ($\Psi$) and one for composition ($\Theta$). Using the analysis function, each layer is analyzed independently (and in parallel). The results (which are termed partial from the MLN perspective) from each of the two layers  are then combined using a composition function/algorithm to produce the results for the two layers of the HoMLN. This binary composition can be applied to MLNs with more than two layers. Independent analysis allows one to use existing algorithms on smaller graphs. The decoupling approach, moreover, adds efficiency, flexibility, and scalability.

\begin{figure}[h]
  \centering
  \includegraphics[width=\linewidth]{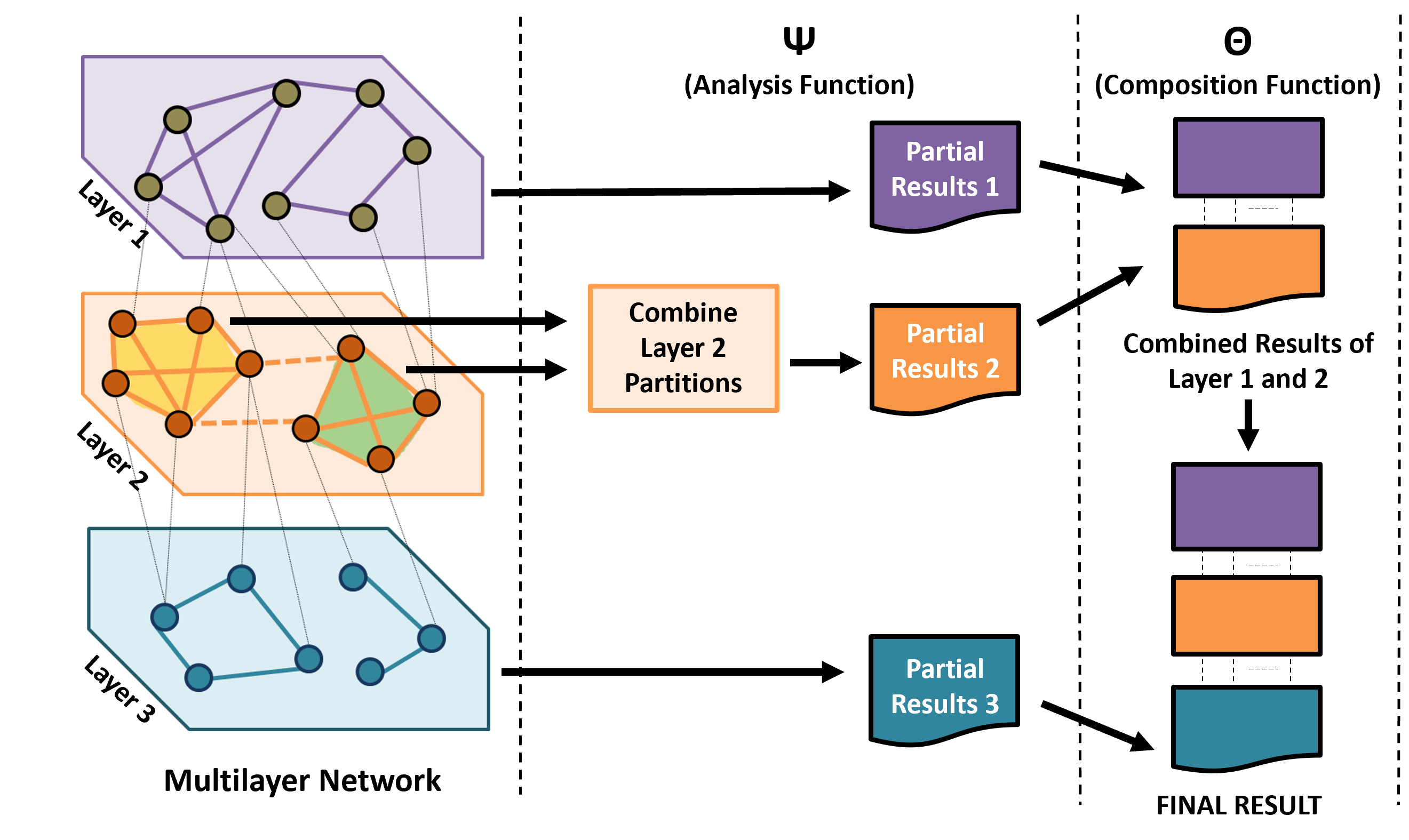}
  \caption{Overview of the network decoupling approach.}
  \label{fig:decoupling}
\end{figure}

The network decoupling method preserves structure and semantics which is critical for drill-down and visualization of results. As each layer is analyzed independently, the analysis can be done in parallel reducing overall response time. Due to the MLN model, each layer (or graph) is likely to be smaller, requires less memory than the entire MLN, and provides clarity. The results of the analysis functions are saved and used for the composition. Each layer is analyzed only once. Typically, the composition function is less complex and is quite efficient, as shall be shown.
Any of the existing simple graph centrality algorithms can be used for the analysis of individual layers. \textit{Also, this approach is application-independent. }

When compared to similar single network approaches, achieving high accuracy with a decoupling approach is the challenge, especially for global measures. While analyzing one layer, identifying minimal additional information needed for improving accuracy due to composition is the main challenge. For many algorithms that have been investigated, there is a direct relation between the amount of additional information used and accuracy gained, however this effects the efficiency.

\subsection{Benefits and Challenges of Decoupling-Based Approach}
\label{sec:mln_algorithm_challenges}

\noindent For analyzing MLNs, currently, HoMLNs are converted into a single graph using aggregation approaches. Given two vertices $u$ and $v$, the edges between them are aggregated into a single graph. The presence of an edge between vertices $u$ and $v$ depends on the aggregation function used. In Boolean AND composed layers, if an edge is present between the same vertex pair $u$ and $v$ in \textit{both} layers, then it will be present in the AND composed layer. Similarly, in Boolean OR composed layers, if an edge is present between the same vertex pair $u$ and $v$ in \textit{at least one} of the layers in HoMLN, then the edge will be present in the OR composed layer. 

Both HoMLNs and HeMLNs are a set of layers of single graphs. Hence, the MLN model provides a natural partitioning of a large graph into layers of a MLN. 
The layer-wise analysis as the basis of the decoupling approach has several benefits. First, the entire network need not be loaded into memory, only a smaller layer. Second, the analysis of the individual layers can be parallelized decreasing the total response time of the algorithm. Finally, the computation used in the composition function ($\Theta$) is based on intuition which is embedded into the heuristic and requires significantly less computation than $\Psi$.

When analyzing a MLN, the accuracy depends on the information being kept (in addition to the output) during the analysis of individual layers. In terms of centrality measures, the bare minimum information that can be kept from each layer is the high centrality (greater than average centrality value) nodes of that layer along with their centrality values. 
Retaining the minimal information, \textit{local centrality measures} such as degree centrality can be calculated relatively easily with high accuracy \cite{kdir22} \cite{pavel2022degree}. 


However,  calculation of a global measure, such as closeness centrality nodes, requires information of the entire MLN. \textit{This compounds the difficulty of computation of closeness centrality of an MLN in the decoupling approach partial information used for estimation of the result will greatly impact the accuracy. Identification of useful minimal information and the intuition behind that are the primary challenges.}
\section{Closeness Centrality: Challenges}
\label{sec:closeness_centrality}

\noindent The closeness centrality value of a node $v$ describes how far are the other nodes in the network from $v$ or how fast or efficiently a node can spread information through the network. For example, when an internet service provider considers choosing a new geo-location for their servers, they might consider a city that is geographically closer to most cities in the region. An airline is interested in identifying a city for their hub that connects to other important cities with a minimum number of hops or layovers. 
For both, computing the closeness centrality of the network is the answer.

\begin{figure}[h]
  \centering
  \includegraphics[width=\linewidth]{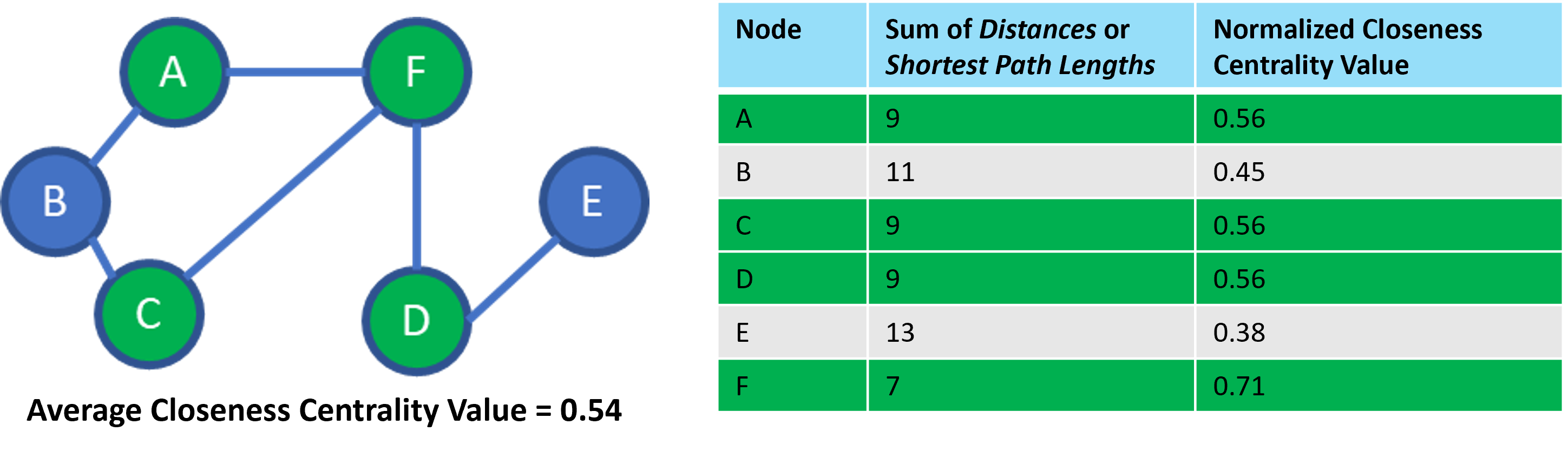}
  \caption{Closeness centrality for a small toy graph.}
  \label{fig1}
\end{figure}

\begin{figure*}[h]
    \centering
    \includegraphics[width=1.6\columnwidth]{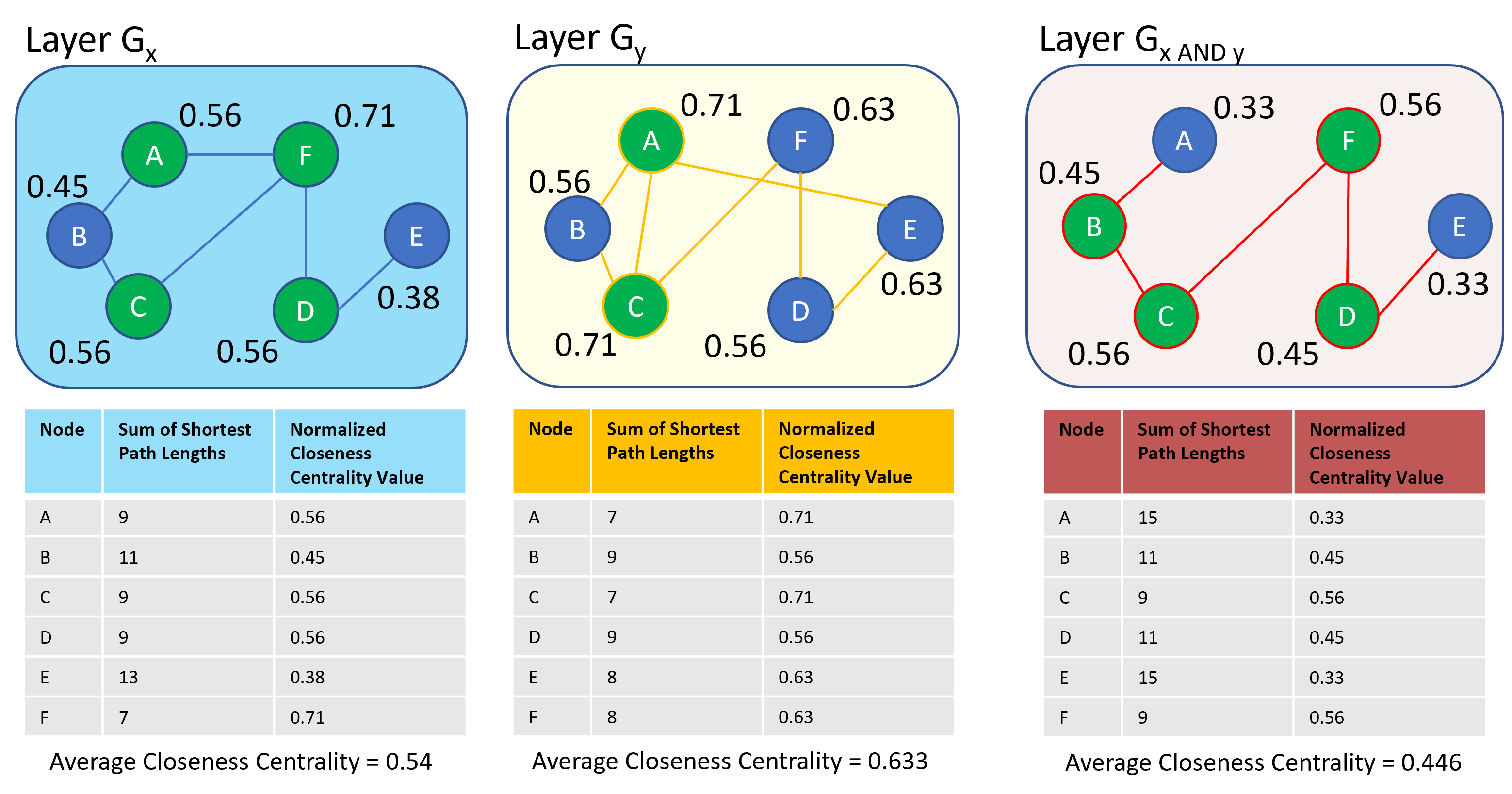}
    \caption{\small Layer $G_x$, $G_y$, and AND-aggregated layer created by $G_x$ and $G_y$, $G_{xANDy}$ (which is computed as $G_x~AND~G_y$) with the respelctive closeness centrality of each node. The nodes highlighted in green have above average closeness centrality values in their respective layers.}
    \label{fig:homln-lemma}
\end{figure*}

The closeness centrality score/value of a vertex $u$ in a network is defined as, 
\begin{equation}
\label{eq:1}
CC(u) =  \frac{n-1}{\sum_{v}d(u, v))}
\end{equation} 
where $n$ is the total number of nodes and $d(u,v)$ is the shortest distance from node $u$ to some other node $v$ in the network. \textit{The higher the closeness centrality score of a node, the more closely (distance-wise) that node is connected to every other node in the network.}
Nodes with a closeness centrality score higher than the average are considered \textbf{closeness centrality nodes (or CC nodes)}. This definition of closeness centrality is only defined for graphs with a \textit{single connected component}. The closeness centrality is defined for both directed and undirected graphs. In this paper, for an algorithm based on the decoupling approach, the focus is on the problem of finding high (same or above average) closeness centrality nodes of Boolean AND aggregated layers of a MLN for undirected graphs. Even though closeness centrality is not well defined for graphs with multiple connected components, the heuristics work for networks where each layer could consist of multiple connected components or the AND aggregated layer has multiple connected components. The proposed heuristics consider the normalized closeness centrality values over the connected component in the layers \cite{wasserman1994social}.

For closeness centrality discussed in this paper, the \textbf{ground truth} is calculated as follows: i) Two layers of the MLN are aggregated into a single graph using the Boolean AND operator and ii) Closeness centrality nodes of the aggregated graph are calculated using an existing algorithm. \textit{The same algorithm is also used on the layers for calculating CC nodes of each layer.}

For finding the ground truth CC nodes and identifying the CC nodes in the layers, the NetworkX package \cite{hagberg2008exploring} implementation of closeness centrality \cite{freeman1978centrality} \cite{wasserman1994social} is used. The implementation of the closeness centrality algorithm in this package uses breadth-first search (BFS) to find the distance from each vertex to every other vertex. For disconnected graphs, if a node is unreachable, a distance of $0$ is assumed and finally, the obtained scores are normalized using Wasserman and Faust approximation which prioritizes the closeness centrality score of vertices in larger connected components \cite{wasserman1994social}. For a graph with V vertices and E edges, the time complexity of the algorithm is $O(V(V+E))$. 

For two-layer MLN, the \underline{\textbf{naive composition}} (as a base composition algorithm for comparison) amounts to taking the CC nodes of each layer (using the same algorithm used for the ground truth), followed by their intersection (as AND aggregation is being used). The resultant set of nodes will be the estimated CC nodes of the MLN layers using the composition approach. The naive approach is the simplest form of composition (using the decoupling approach) and does not use any additional information other than the CC nodes from each layer. The hypothesis is that the naive approach is going to perform poorly when the topology of the two layers is very different. Observation~\ref{obs:1} illustrates that the naive composition approach is \textbf{not guaranteed} to give ground truth accuracy, due to the \textit{generation of false positives and false negatives.} One situation where naive accuracy coincides with the ground truth accuracy is when the two layers are identical. The naive accuracy will fluctuate with respect to ground truth without reaching it in general. \\

\begin{myobserve}
\label{obs:1}
\textit{A node that has above average closeness centrality value in the AND-aggregated layer created by $G_x$ and $G_y$ is not guaranteed to have above average closeness centrality value in one or both of the layers $G_x$ and $G_y$}.
\end{myobserve}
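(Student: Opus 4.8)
Since the observation asserts only that a property is \emph{not guaranteed}, it is an existential claim, and a single counterexample is enough to settle it. The plan is therefore to exhibit explicit layers $G_x$ and $G_y$ over a shared vertex set $V$, form the AND-aggregated layer $G_{xANDy}$ by retaining exactly the edges present in \emph{both} layers, and then produce a witness node $w$ that is a CC node of $G_{xANDy}$ (score at or above the aggregated mean) while failing to be a CC node in at least one of $G_x$, $G_y$. Concretely, I would reuse the three-graph configuration already drawn in Figure~\ref{fig:homln-lemma}, reading off the green (above-average) nodes in each panel rather than building a fresh example.

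First I would fix the three graphs and, using Equation~\eqref{eq:1} together with the Wasserman--Faust component-wise normalization, tabulate $CC(u)$ for every node in $G_x$, in $G_y$, and in $G_{xANDy}$. For each of the three graphs I would compute the mean closeness value and mark the set of nodes whose score lies at or above that mean. This reduces the whole statement to a finite arithmetic check on a small fixed example, so no asymptotics or genericity arguments are needed.

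Second, I would single out a node $w$ whose aggregated score sits above the aggregated mean while its layer score $CC_x(w)$ (or $CC_y(w)$) sits below the corresponding layer mean; such a $w$ immediately contradicts the supposed guarantee. The mechanism I would rely on is that $G_{xANDy}$ is an edge-subgraph of each layer, so every pairwise distance can only grow and all absolute closeness values fall; the leverage is that the \emph{mean} falls too, and it need not fall at the same rate for $w$ as for the rest of the graph. A node that is merely middling in a dense layer, where several tightly packed hubs reach everyone and outscore it, can become comparatively central once aggregation sparsifies or fragments the graph and flattens the score distribution around a lower mean. Under the Wasserman--Faust normalization the size of the connected component in which a node lands also reweights its score, which is precisely what makes the relative ranking flip non-monotonically.

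The main obstacle is engineering the two layers so that the inequality flips in the intended direction while keeping the example small and the normalization honest: AND-aggregation frequently disconnects the graph, so I must track which connected component each node falls into and apply the component-wise normalization consistently, rather than naively comparing reciprocal-distance sums across graphs of differing connectivity. Once the layers are chosen so that $w$ occupies a favorable position in the sparsified $G_{xANDy}$ yet is dominated by stronger hubs in at least one original layer, the finite verification closes the argument. As a byproduct this exhibits exactly the false negatives that the intersection-based naive composition can generate, which motivates the heuristics developed in the sequel.
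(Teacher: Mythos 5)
Your proposal is correct and takes essentially the same route as the paper: the paper also treats the observation as an existential claim and settles it by pointing at the concrete configuration in Figure~\ref{fig:homln-lemma}, where node \textbf{B} lacks above-average closeness in both $G_x$ and $G_y$ yet exceeds the average in $G_{xANDy}$. Your mechanistic explanation --- AND-aggregation lengthens paths for everyone, but the mean drops unevenly, so a node that retains enough \emph{common} short paths can rise above it --- is exactly the intuition the paper's example articulates.
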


\noindent \textbf{Example}: For single connected component graphs, assume that an arbitrary node, say $u$, \textit{does not} have above average centrality value in layer $G_x$ and layer $G_y$. This means that node $u$ has longer paths to reach every other node in the individual networks. That is, there exist other nodes, say $v$, which cover the entire \textit{individual} networks through shorter paths, thus having high closeness centrality value. There can be scenarios in which these other nodes ($v$) \textit{do not have enough short paths} that exist in \textit{both} layer $G_x$ and layer $G_y$, and as a result bring down their closeness centrality values (and also the average) in the aggregated layer, $G_{xANDy}$. Here, if the node $u$, has common paths from the two layers that are shorter than the common paths for other nodes, then its closeness centrality value will go above average in the AND aggregated layer.  

This scenario is exemplified by node \textbf{B} in Figure \ref{fig:homln-lemma}, which in spite of not having above average closeness centrality value in layers $G_x$ and $G_y$, has above average closeness centrality value in the AND-aggregated layer, $G_{xANDy}$. It can be observed that the closeness centrality value for other nodes decreased significantly as they did not have enough common shorter paths, however, node \textbf{B}, maintained enough common short paths, thus pushing its closeness centrality value above the average in the AND aggregated layer.
 
Hence, it illustrates that a node that has above average closeness centrality value in the AND-aggregated layer created by $G_x$ and $G_y$ is not guaranteed to have above average closeness centrality value in one or both of the layers $G_x$ and $G_y$.

\begin{mylemma}
\label{lem:4}
\textit{It is sufficient to maintain, for every pair of nodes, \textbf{all paths} from the layers, $G_x$ and $G_y$, in order to find out the shortest path between the same nodes in the AND-aggregated layer created by $G_x$ and $G_y$.}
\end{mylemma}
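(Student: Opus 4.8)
The plan is to reduce the lemma to a single edge-level fact about Boolean AND-aggregation and then lift that fact from single edges to whole paths. Recall from Section~\ref{sec:mln_algorithm_challenges} that an edge $(a,b)$ belongs to $G_{xANDy}$ if and only if $(a,b)\in E_x$ \emph{and} $(a,b)\in E_y$, i.e.\ $E(G_{xANDy}) = E_x \cap E_y$. First I would fix an arbitrary pair of nodes $(u,v)$ and introduce notation for the three relevant collections of $u$--$v$ paths: $\mathcal{P}_x(u,v)$, $\mathcal{P}_y(u,v)$, and $\mathcal{P}_{xANDy}(u,v)$, each consisting of the simple paths joining $u$ and $v$ in the respective graph, where a path is identified with its sequence of edges. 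Because the layers are unweighted and undirected, a shortest $u$--$v$ path in $G_{xANDy}$ is necessarily simple, so restricting attention to simple paths loses nothing.

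The central step is to prove the set identity $\mathcal{P}_{xANDy}(u,v) = \mathcal{P}_x(u,v) \cap \mathcal{P}_y(u,v)$. For the forward inclusion, I would take any $P \in \mathcal{P}_{xANDy}(u,v)$; every edge of $P$ lies in $E(G_{xANDy})$ and hence, by the edge-level fact, in both $E_x$ and $E_y$, so the \emph{same} vertex sequence $P$ is a valid path in $G_x$ and in $G_y$, giving $P \in \mathcal{P}_x(u,v) \cap \mathcal{P}_y(u,v)$. For the reverse inclusion, suppose the identical path $P$ appears in both $\mathcal{P}_x(u,v)$ and $\mathcal{P}_y(u,v)$; then each of its edges lies in $E_x \cap E_y = E(G_{xANDy})$, so $P$ is a path in $G_{xANDy}$. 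This two-way argument is where the whole statement lives, and it is essentially a direct consequence of AND-aggregation acting edgewise.

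With the identity in hand the conclusion is immediate: the shortest $u$--$v$ distance in the aggregated layer is $\min_{P \in \mathcal{P}_x(u,v)\cap\mathcal{P}_y(u,v)} |P|$, a quantity that depends only on the per-layer path collections $\mathcal{P}_x$ and $\mathcal{P}_y$. Hence retaining, for each node pair, all paths produced by analyzing $G_x$ and $G_y$ in isolation carries enough information to recover every shortest path in $G_{xANDy}$ by forming the intersection and selecting a minimum-length element; no knowledge of the joint structure beyond the two path collections is required, which is precisely the sufficiency the lemma asserts.

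I do not expect a hard combinatorial obstacle here; the main thing to get right is the careful identification of a path with its edge sequence, so that the phrase ``the same path appears in both layers'' is unambiguous and the two inclusions genuinely compose. The remaining care goes into the boundary cases, namely pairs $(u,v)$ for which $\mathcal{P}_x(u,v)\cap\mathcal{P}_y(u,v)=\emptyset$, so that $u$ and $v$ lie in different connected components of $G_{xANDy}$ and the distance is treated as unreachable, consistent with the disconnected-component convention adopted earlier in this section. Finally, I would stress explicitly that sufficiency is not minimality: the lemma claims only that all paths are \emph{enough}, which is exactly what motivates the subsequent heuristics that retain far less than the full path collections.
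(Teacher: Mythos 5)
Your proposal is correct and follows essentially the same route as the paper's proof: maintain the per-layer path sets $P_x(u,v)$ and $P_y(u,v)$, intersect them, and observe that the shortest common path is exactly the shortest $u$--$v$ path in $G_{xANDy}$, with the empty-intersection case corresponding to disconnection. Your write-up is in fact tighter than the paper's, which asserts only one direction (a common path lies in the AND-aggregated graph ``by definition'') and supports the converse with a worked example from Figure~\ref{fig:homln-lemma}, whereas you prove both inclusions of the identity $\mathcal{P}_{xANDy}(u,v)=\mathcal{P}_x(u,v)\cap\mathcal{P}_y(u,v)$ explicitly.
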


\begin{proof}
Suppose, for every pair of nodes, say $u$ and $v$, the set of \textit{all} paths from the layers $G_x$ and $G_y$, say $P_{x}(u,v)$ and $P_{y}(u,v)$, respectively, is being maintained. Then, by set intersection between $P_{x}(u,v)$ and $P_{y}(u,v)$, one can find out the shortest among the paths that exist between $u$ and $v$ in \textit{both} the layers, $G_x$ and $G_y$, which is basically the shortest path between them in the ANDed graph ($G_{xANDy}$) as well. The common path has to be part of the AND-aggregated graph by definition. If there are no common paths, the AND-aggregation is disconnected and the result still holds.
Figure \ref{fig:homln-lemma} shows an example. For the nodes \textbf{A} and \textbf{F}, if it is maintained $P_{x}(A,F)$ = $\{<$(A,F)$>$, $<$(A,B), (B,C), (C,F)$>\}$ from layer $G_x$ and $P_{y}(A,F)$ = $\{<$(A,C), (C,F)$>$, $<$(A,C), (B,C), (C,F)$>$, $<$(A,E), (E,D), (D,F)$>\}$ from layer $G_y$, then one can obtain the path $<$(A,C), (B,C), (C,F)$>$ as the shortest common path, which is the correct shortest path between these nodes in the ANDed layer, $G_{xANDy}$. The shortest path from $G_x$ between A and F cannot appear in the ANDed graph. This proves the Lemma~\ref{lem:4} that it is sufficient to maintain all paths between every pair of nodes to find out the shortest paths in the ANDed layer.
\end{proof}

Based on Lemma~\ref{lem:4}, 
in the composition step, for every pair of nodes, the path sets from two layers need to be intersected to find out the shortest common path. In doing so the closeness centrality value for each node in the ANDed graph can be calculated. Clearly, if $G_x (V, E_x)$ and $G_y (V, E_y)$ are two layers then the number of paths between any two nodes will be $(V-2)!$, in the worst case. Thus, the composition phase will have a complexity of $O((V-2)!(V-2)!)$, which is \textit{exponentially higher} than the ground truth complexity $O(V(V+min(E_x, E_y)))$ and defeats the entire purpose of the decoupling approach. Thus, \textbf{the challenge is to identify the minimum amount of information to gain the highest possible accuracy over the naive approach by reducing the number of false positives and false negatives, without compromising on the efficiency.}

\section{Closeness Centrality Heuristics}
\label{sec:proposedheuristics}
Two heuristic-based algorithms have been proposed for computing CC nodes for Boolean AND aggregated layers using the decoupled approach. The accuracy and performance of the algorithms have been tested against the ground truth. Extensive experiments have been performed on data sets with varying graph characteristics to show that the solutions work for any graph and have much better accuracy than the naive approach. Also, the efficiency of the algorithms is significantly better than ground truth computation. The solution can be extended to MLNs with any number of layers, where the analysis phase is applied once and the composition phase is applied as a function of pairs of partial layer results, iteratively. 

Jaccard coefficient is used as the measure to compare the accuracy of the solutions with the ground truth. Precision, recall, and F1-score have also been used as evaluation metrics to compare the accuracy of the solutions. 
For performance, the execution time of the solution is compared against that of the ground truth. The ground truth execution time is computed as: time required to aggregate the layers using AND composition function + time required to identify the CC nodes on the combined graph. The time required for the proposed algorithms using the decoupling approach is: \textbf{max}(layer 1, layer 2 analysis times) + composition time. \textbf{The efficiency results do not change much even if the max is not used.}




\subsection{Closeness Centrality Heuristic CC1}

\begin{figure*}[t!]
    \centering
    \begin{subfigure}[t]{0.4\textwidth}
        \centering
        \includegraphics[width=\textwidth]{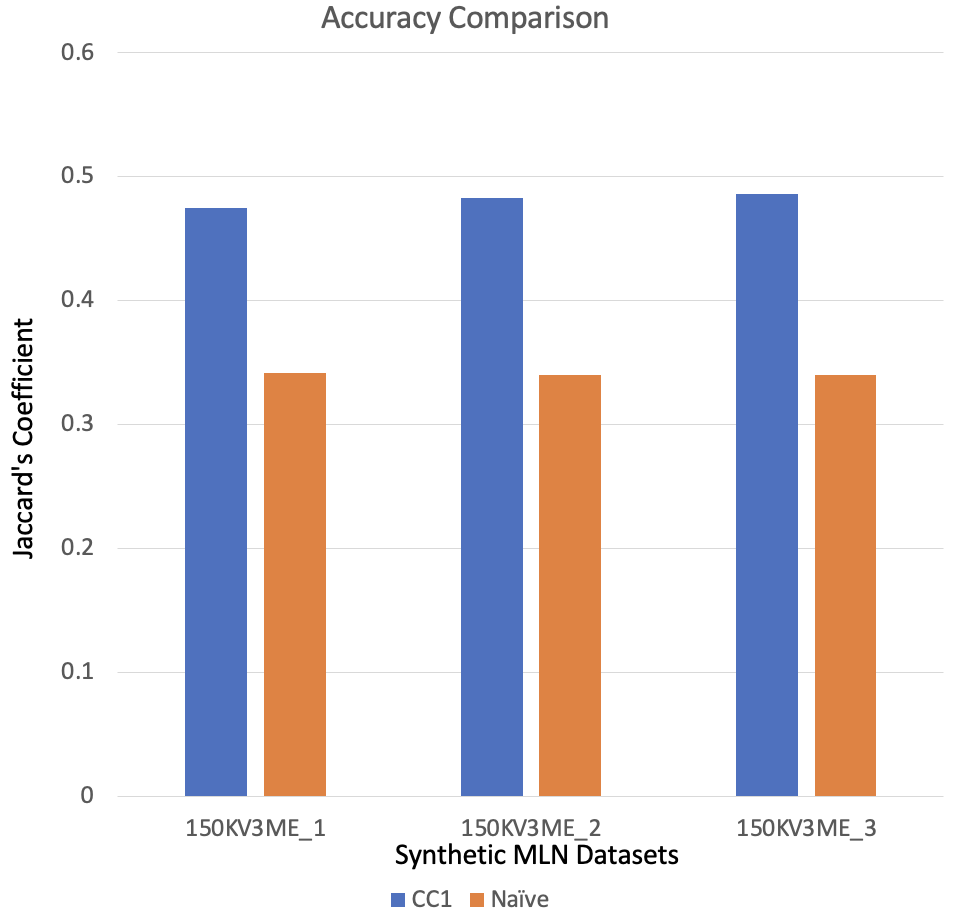}
        \caption{Accuracy.: \texttt{CC1} vs. Naive}
        \label{fig:cc1_accuracy_sec5}
    \end{subfigure}%
    ~ 
    \begin{subfigure}[t]{0.4\textwidth}
        \centering
        \includegraphics[width=\textwidth]{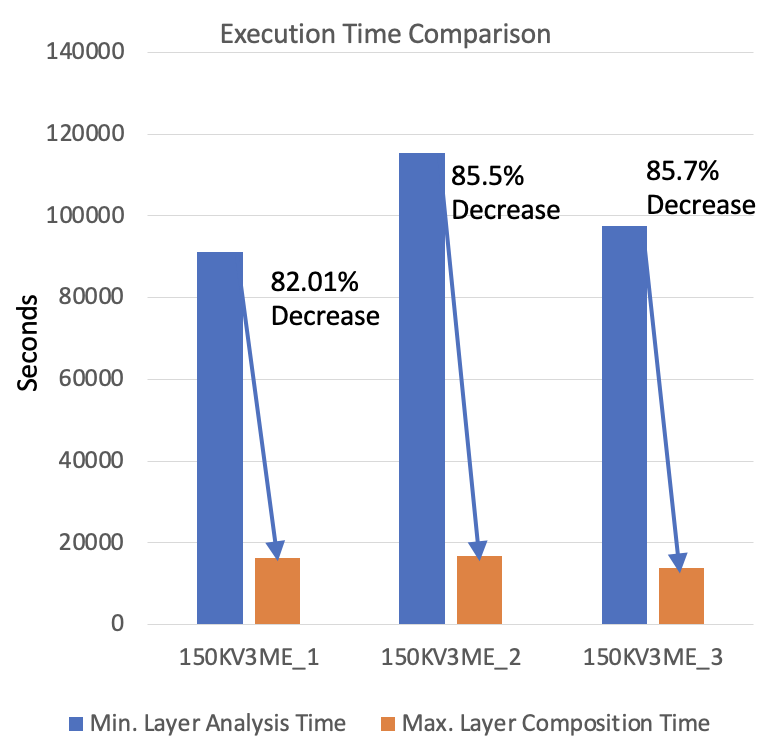}
        \caption{Min $\Psi$ vs. Max $\Theta$ Times}
        \label{cc1_performance_eval_sec5}
    \end{subfigure}
    \caption{Accuracy and Execution Time Comparison: \texttt{CC1} vs. Naive}
    \label{fig:cc1_accuracy_performance_eval}
\end{figure*}

\noindent Intuitively, CC nodes in single graphs have high degrees (\textit{more paths go through it and likely more shortest paths}) or have neighbors with high degrees (similar reasoning.) In the ANDed graph (ground truth graph), CC nodes that are common among both layers have a high chance of becoming a CC node. Moreover, if a common CC node has high overlap of neighborhood nodes from both layers with above average degree and low average sum of shortest path (SP) distances, there is a high chance of that node becoming a CC node in the ANDed graph. Using this intuition and observation, heuristic CC1 has been proposed for identifying CC nodes for two layers as a MLN. 

As discussed earlier, in the decoupling approach the analysis function $\Psi$ is used to analyze the layers once and the partial results and additional information in the composition function $\Theta$ are used to obtain intermediate/final results. For \texttt{CC1}, after the analysis phase ($\Psi$) on each layer (say, $G_x$) for each node (say, $u$), its degree ($deg_{x}(u)$) and sum of shortest path distances ($sumDist_{x}(u)$), and the one-hop neighbors ($NBD_{x}(u)$) are maintained if $u$ is a CC node, that is $u$ $\in$ $CH_x$. When calculating $sumDist_{x}(u)$, if a node $v$ is unreachable (which can happen if the graph/layer has multiple disconnected components), the distance $dist(u,v) = n$ where $n$ is the number of vertices in the layer (\textit{same in each layer -- HoMLN}) is considered. This is because the maximum possible path length in a graph with n nodes, is (n-1).


In the composition phase ($\Theta$), for each vertex $u$ in each layer (say, $G_x$), the degree distance ratio ($degDistRatio_{x}(u)$) is calculated with respect to the second layer (say, $G_y$) to estimate the likelihood of a node to be a CC node in the ANDed graph (ground truth graph) as per equation \ref{eq:ddr}.

\begin{equation}
\label{eq:ddr}
degDistRatio_{x}(u) =  \frac{sumDist_{x}(u)}{min(deg_{x}(u), deg_{y}(u))}
\end{equation} 

A smaller value of this ratio (i.e. smaller sum of distances and/or higher degree) means the vertex $u$ has a higher chance of becoming a CC node in the ANDed graph. 
When calculating the $degDistRatio_{x}(u)$, the degree is estimated for vertex $u$ in the ANDed graph, by taking into account the upper bound of degree, which is $min(deg_{x}(u), deg_{y}(u))$. Instead of using the degree of the nodes in each layer, using the estimated degree of ground truth graph gives a better approximation of the ratio value of the sum of distance and degree for the ground truth graph.
Due to \textit{decrease} in the edges in the ground truth graph, the average sum of distances will increase (as on an average, paths will get longer). As a result, the average degree distance ratio for the ANDed graph, $G_{xANDy}$ is estimated using equation \ref{eq:avg_ddr}.

\begin{equation}
\label{eq:avg_ddr}
\begin{split}
a&vgDegDistRatio_{xANDy}~~~~  = \\
& max(avgDegDistRatio_{x}, ~avgDegDistRatio_{y})
\end{split}
\end{equation}

For each CC node in each layer, the set of \textit{central} one-hop nodes, i.e. nodes having the $degDistRatio$ less than the $avgDegDistRatio_{xANDy}$, is calculated. Finally, those \textit{common} CC nodes from the two layers, which have a \textit{significant overlap} among the \textit{central} one-hop neighbors are identified as the CC nodes of the ANDed graph, that is given by the set $CH'_{xANDy}$. 
The complexity of the composition algorithm 
is dependent on the final step where the overlaps of CC nodes and their one-hop neighborhoods is considered. 
The algorithm will have a worst case complexity of $O(V^{2})$, if both layers are complete graphs consisting of V vertices. Based on the wide variety of data sets used in experiments, it is believed that the composition algorithm will have an average case complexity of $O(V)$. We are not showing the CC1 composition algorithm due to space constraints and decided to include a better composition algorithm (CC2.)

\begin{figure}[h]
  \centering
\includegraphics[width=1.1\linewidth]{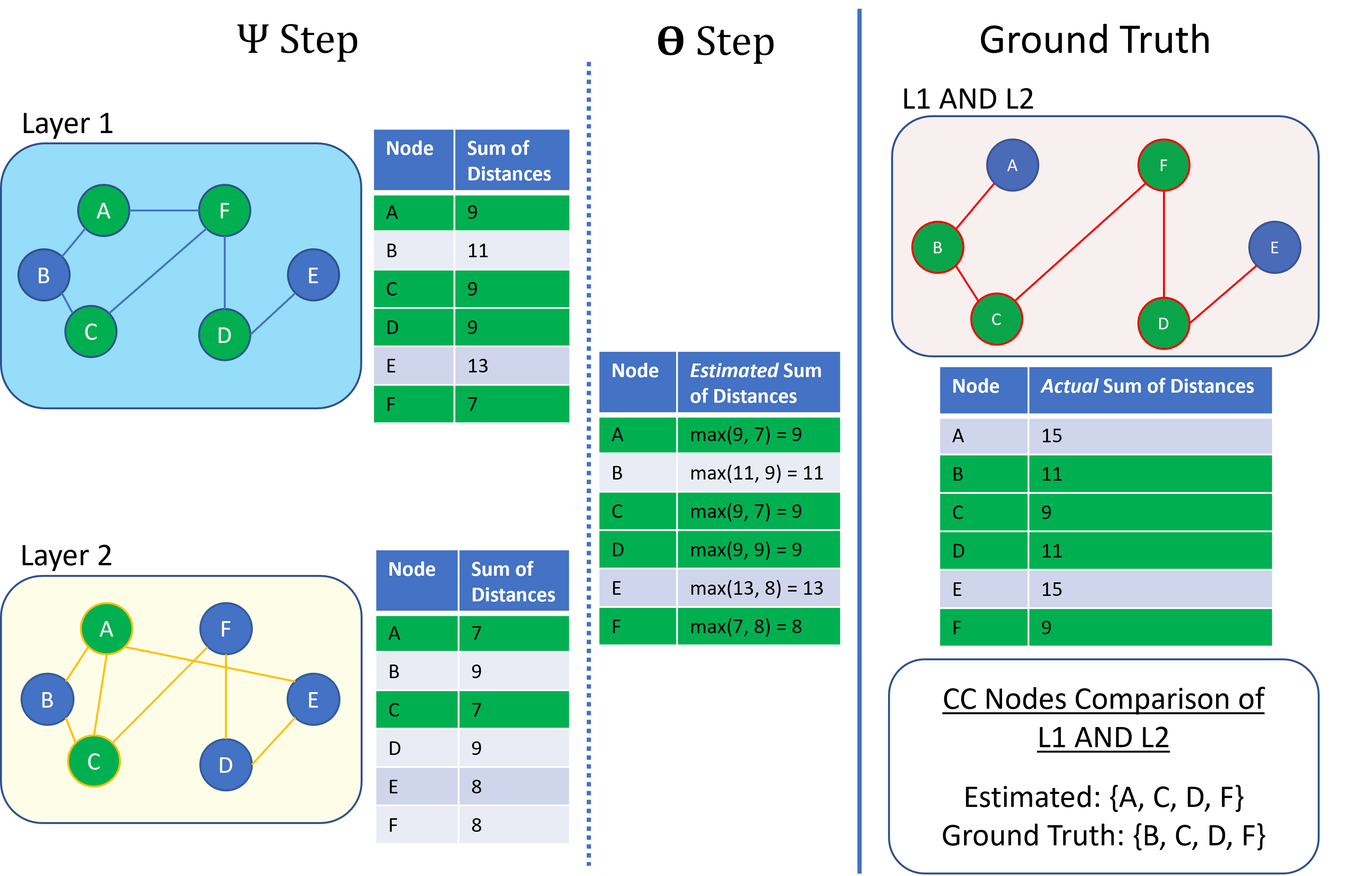}
  \caption{Intuition behind \texttt{CC2} with example.}
  \label{fig:cc2_intuition.png}
\end{figure}

\noindent \textbf{Discussion. } Experimental results for CC1 are shown in Section~\ref{sec:results}. 
Figure \ref{fig:cc1_accuracy_performance_eval} shows the accuracy (Jaccard coefficient) of \texttt{CC1} compared against naive approach on a subset of the synthetic data set 1 (details in Table \ref{tab:dataset1}). Although there is significant improvement in accuracy as compared to naive, it is still below 50\%. This is seen as a drawback of the heuristic. Also, composition time is somewhat high for large graphs. In addition, keeping one-hop neighbors of the CC nodes of both layers is significant amount of additional information. 
Figure \ref{cc1_performance_eval_sec5} shows the \textit{maximum composition time} against the \textit{minimum analysis time} of the layers (worst case scenario). Even though the composition time takes less time than the time it takes for the analysis of layers (one-time cost), the goal is to further \textit{reduce the composition time and the additional information necessary for the composition} without making any major sacrifice to accuracy. 

To overcome the above, \textit{Closeness Centrality Heuristic \texttt{CC2} has been developed, which keeps less information than \texttt{CC1}, has faster composition time, and provides better (or same) accuracy.}




\subsection{Closeness Centrality Heuristic CC2}

\noindent For heuristic \texttt{CC1}, the estimated value of $avgDegDistRatio_{xANDy}$ is less than or equal to the actual average degree-distance ratio of the AND aggregated layer, so false negatives will be generated. The composition step of \texttt{CC1} is also computationally expensive and requires a lot of additional information. Furthermore, \texttt{CC1} cannot identify all CC nodes of the ground truth graph. Closeness centrality heuristic 2 or \texttt{CC2} has been designed to address these issues.

The design of \texttt{CC2} is based on estimating the sum of shortest path (SP) distances for vertices in the ground truth graph. If the sum of SP distances of a vertex in the individual layers is known, the upper and lower limit of the sum of SP distances for that vertex in the ground truth graph can be estimated. This idea can be intuitively verified. Let the sum of SP distances for vertex $u$ in layer $G_x$ ($G_y$) be $sumDist_{x}(u)$ ($sumDist_{y}(u)$.) Let the set of layer $G_x$ ($G_y$) edges be $E_x$ ($E_y$.) In the ground truth graph, the upper bound for the sum of SP distances of a vertex $u$ is going to be $\infty$, if the vertex is disjoint in any of the layers. If $E_x \cap E_y = E_x$, the sum of SP distances for any vertex $u$ in the ground truth graph is going to be $sumDist_{x}(u)$. If $E_x \subseteq E_y$, then the ground truth graph will have the same edges as layer $x$ and sum of SP distances for a vertex $u$ in the ground truth graph will be same as the sum of SP distances for that vertex in layer $G_x$. When $E_x \cap E_y = E_x$, for any vertex $u$, $sumDist_{x}(u) \geq sumDist_{y}(u)$ because layer $G_x$ will have less edges than layer $G_y$ and average length of SPs in layer $G_x$ will be higher than average length of SPs in layer $G_y$. Similarly, when $E_x \cap E_y = E_y$, the sum of SP distances for any vertex $u$ in the ANDed graph is going to be $sumDist_{y}(u)$ and $sumDist_{y}(u) \geq sumDist_{x}(u)$. From the above discussion, it can be said that the sum of SP distances of a vertex $u$ in the ANDed graph is between $max(sumDist_{x}(u), sumDist_{y}(u))$ and $\infty$.

\begin{algorithm}
\small
\caption{Procedure for Heuristic CC2}
\label{alg:heuristic_2}

\noindent \textbf{INPUT:} $deg_x(u)$, $deg_x(u)$, $sumDist_x(u)$, $sumDist_y(u)$ $\forall u$ ; \\ $CH_x$, $CH_y$

\textbf{OUTPUT:} $CH'_{xANDy}$: estimated CC nodes in ANDed graph \\

\begin{algorithmic}[1]
\FOR{$u$}
    \STATE $estSumDist_{xANDy}(u) \gets max(sumDist_{x}(u), sumDist_{y}(u))$
\ENDFOR
\STATE Calculate $CH{'}_{xANDy}$ using $estSumDist_{xANDy}$
\end{algorithmic}
\end{algorithm}

Algorithm \ref{alg:heuristic_2} shows the composition step for the heuristic \texttt{CC2}. In this composition function, it is assumed that the estimated sum of SP distances of vertex $u$ is $estSumDist_{xANDy}(u)$. $sumDist_{x}(u)$ and $sumDist_{y}(u)$ are the sum of SP of node $u$ in layer $G_x$ and layer $G_y$ respectively, and $CH{'}_{xANDy}$ is the estimated set of CC nodes of the ANDed layer. Figure \ref{fig:cc2_intuition.png} illustrates how the composition function using CC2 is applied on a HoMLN with two layers. For each node $u$, the sum of SP is estimated, which is the maximum sum of SP of node among the two layers. In the example, node \textit{A} has a sum of SP as 9 and 7 in layer 1 and layer 2, respectively. The estimated sum of SP of node A will be 9 in the ANDed layer. Similarly, the sum of SP of other vertices can be estimated. Once the estimated sum of SP of all the vertices in the ANDed graph is completed, Equation \ref{eq:1} can be used to calculate the CC values of the nodes in the ANDed layer. As the CC value for the ANDed layer is calculated using the estimated sum of SP, one can either take the CC nodes with above-average closeness centrality scores or take the top-k CC nodes. For an MLN with $V$ nodes in each layer, the \textit{worst-case complexity of the composition algorithm for \texttt{CC2} is $O(V)$.}

\vspace{5pt}
\noindent \textbf{Discussion. } Experimental results for \texttt{CC2} are shown in Section~\ref{sec:results}. 
In general, this heuristic gave better recall as compared to the naive and \texttt{CC1} approach by being able to decrease the number of false negatives (shown in Table \ref{tab:recall_summary_syn}). Moreover, in terms of accuracy, both Jaccard coefficient and F1-score (shown in Table~\ref{tab:f1score_summary_syn}) are better for \texttt{CC2} as compared to Naive and CC1 approach, (except a few special cases details on which are in section \ref{sec:results}). Also it has been shown empirically that \texttt{CC2} is much more computationally efficient than \texttt{CC1}.



\vspace{-20pt}
\section{Experimental Analysis}

\label{sec:experimental_setup}
\noindent The implementation has been done in Python using the NetworkX \cite{hagberg2008exploring} package. The experiments were run on SDSC Expanse \cite{xsede} with single-node configuration, where each node has an AMD EPYC 7742 CPU with 128 cores and 256GB of memory running the CentOS Linux operating system.

\begin{table}[!h]
\caption{Summary of Synthetic Data Set-1}
\label{tab:dataset1}
\scriptsize
\centering
\renewcommand{\arraystretch}{1.4}
\begin{tabular}{|p{0.7cm}|p{0.25cm}|p{1cm}|c|c|c|}
\hline
{Base Graph} & \multirow{3}{1cm}{G$_{ID}$} & {Edge Dist. \%} &  \multicolumn{3}{c|}{\#Edges} \\ 
\cline{4-6}
 \#V, \#E & & \textit{L1\%,L2\%} & L1 & L2 & L1 AND L2 \\ 
\hline
\multirow{3}{0.7cm}{50KV, 250KE} & 1 & 70,30             & 224976   & 124988   & 50319           \\ 
\cline{2-6}
& 2 & 60,40 & 149982 & 199983   & 50392           \\ 
\cline{2-6}
& 3 & 50,50 & 174980 & 174977   & 50422           \\ 
\hline
\hline
\multirow{3}{0.7cm}{50KV, 500KE} & 4 & 70,30 & 399962   & 199986   & 51374           \\ 
\cline{2-6}
& 5     & 60,40 & 249978   & 349954   & 51458           \\ 
\cline{2-6}
& 6   & 50,50  & 299971   & 299960   & 51541           \\ 
\hline
\hline
\multirow{3}{0.7cm}{50KV, 1ME} & 7 & 70,30  & 349955   & 749892   & 55158           \\ 
\cline{2-6}
& 8 & 60,40   & 649918   & 449935   & 55647           \\ 
\cline{2-6}
& 9 & 50,50   & 549933   & 549922   & 55896           \\ 
\hline
\hline
\multirow{3}{0.7cm}{100KV, 500KE} & 10    &  70,30 & 249989   & 449970   & 100412          \\ 
\cline{2-6}
& 11     & 60,40               & 299986   & 399978   & 100494          \\ 
\cline{2-6}
& 12    & 50,50           & 349983   & 349981   & 100493          \\ 
\hline
\hline
\multirow{3}{0.7cm}{100KV, 1ME} & 13       & 70,30  & 799937   & 399978   & 101695          \\ 
\cline{2-6}
& 14 & 60,40       & 699948   & 499969   & 101822          \\ 
\cline{2-6}
& 15 & 50,50   & 599958   & 599964   & 101998          \\ 
\hline
\hline
\multirow{3}{0.7cm}{100KV, 2ME} & 16  & 70,30    & 699949   & 1499899  & 106389          \\ 
\cline{2-6}
& 17 & 60,40   & 1299914  & 899926   & 107141          \\ 
\cline{2-6}
& 18 & 50,50   & 1099924  & 1099923  & 107785          \\ 
\hline
\hline
\multirow{3}{0.7cm}{150KV, 750KE} & 19    & 70,30   & 674971   & 374979   & 150398          \\ 
\cline{2-6}
& 20 & 60,40   & 449982   & 599970   & 150447          \\ 
\cline{2-6}
& 21 & 50,50   & 524978   & 524975   & 150475          \\ 
\hline
\hline
\multirow{3}{0.7cm}{150KV, 1.5ME} & 22    & 70,30  & 1199942  & 599978   & 151684          \\ 
\cline{2-6}
& 23 & 60,40       & 749968   & 1049954  & 151883          \\ 
\cline{2-6}
& 24 & 50,50   & 899950   & 899956   & 152005          \\ 
\hline
\hline

\multirow{3}{0.7cm}{150KV, 3ME} & 25   & 70,30    & 1049951  & 2249888  & 156501          \\ 
\cline{2-6}
& 26 & 60,40   & 1349920  & 1949906  & 157295          \\ 
\cline{2-6}
& 27 & 50,50   & 1649922  & 1649909  & 157602          \\ 
\hline
\end{tabular}
\end{table}

For evaluating the proposed approaches, both synthetic and real-world-like data sets were used. Synthetic data sets were generated using PaRMAT \cite{parmat}, a parallel version of the popular graph generator RMAT \cite{rmat} which uses the Recursive-Matrix-based graph generation technique.

For diverse experimentation, for each base graph, 3 sets of synthetic data sets have been generated using PaRMAT. The generated synthetic data set consists of 27 HoMLNs with 2 layers with varying edge distribution for the layers. The base graphs start with 50K vertices with 250K edges and go up to 150K vertices and 3 million edges\footnote{Graph sizes larger than this could not be run on a single node due to the number of hours allowed and other limitations of the XSEDE environment.}. In the first synthetic data set, one layer (L1) follows \textit{power-law degree distribution} and the other one (L2) follows \textit{normal degree distribution}. In the second synthetic data set, both HoMLN layers have \textit{power-law degree distribution}. In the final synthetic data set, both layers have \textit{normal degree distribution}. For each of these, 3 edge distributions percentages (70, 30; 60, 40; and 50, 50) are used for a total of \textbf{81 HoMLNs of varying edge distributions, number of nodes and edges} for experimentation. 

Table \ref{tab:dataset1} shows the details of the different 2-layer MLNs from the first synthetic data set (L1: power-law, L2: normal) used in the experiments. The other two synthetic data sets have similar node and edge distributions.

For the real-world-like data set, the network layers are generated from real-world like monographs using a random number generator. The real-world-like graphs are generated using RMAT with parameters to mimic real world graph data sets as discussed in~\cite{chakrabarti2005tools}. 
As a result, the graphs have multiple connected components and also their ground truth graph.

\subsection{Result Analysis and Discussion}
\label{sec:results}

\noindent In this section, the results of the experiments have been presented. The two proposed heuristics have been tested on synthetic and real-world-like data sets with diverse characteristics. As a measure of accuracy, the Jaccard coefficient has been used. The precision, recall, and F1 scores for the proposed heuristics have also been compared. \textit{As a performance measure, the time taken by the decoupling approach has been compared with the time taken to compute the ground truth (as defined earlier in Section~\ref{sec:closeness_centrality}.) In addition,  the significance of the decoupling approach has also been highlighted by comparing the maximum composition time of the proposed algorithms with the minimum analysis time of the layers.} The accuracy of the algorithms is compared against the naive approach that serves as the baseline for comparison.



\begin{table}[h]
\caption{Accuracy Improvement of \texttt{CC1} and \texttt{CC2} over Naive}
\label{tab:accuracy_summary_syn}
\scriptsize
\centering
\renewcommand{\arraystretch}{1.4}
\begin{tabular}{|c|c|c|c|c|}
\hline
\multirow{2}{*}{Data Set} & \multicolumn{4}{c|}{Mean Accuracy}  \\
\cline{2-5}
 & \texttt{CC1} & \texttt{CC2} & \texttt{CC1} vs. Naive & \texttt{CC2} vs. Naive \\
\hline
Synthetic-1 & 43.56\% & 46.77\% & +52.57\% & {\bf \textcolor{blue}{+63.83\%}} \\
\hline
Synthetic-2 & 55.95\% & 55.20\% & {\bf \textcolor{blue}{+9.77\%}}  & +8.30\%  \\
\hline
Synthetic-3 & 48.87\% & 50.90\% & +47.55\% & {\bf \textcolor{blue}{+53.65\%}}
\\
\hline
Real-world-like & 88.71\% & 88.2\% & +7.36\% & +5.7\%
\\
\hline
\end{tabular}%

\end{table}

\noindent \textbf{Accuracy.} Figure \ref{fig:accuracy_syn_1} illustrates the accuracy of both the heuristics and the naive approach against the ground truth for the synthetic data set-1. As one can see, the accuracy of the heuristics is better than the naive approach in all cases. In most cases, \textbf{\texttt{CC2} performs better than \texttt{CC1}}. The accuracy of \texttt{CC1} increases with the graph density. A similar trend has been observed in other synthetic data sets as well, where the \textbf{proposed \texttt{CC1} and \texttt{CC2} heuristics perform better than the naive approach}.

\begin{figure*}
\centering
\includegraphics[width=0.8\textwidth]{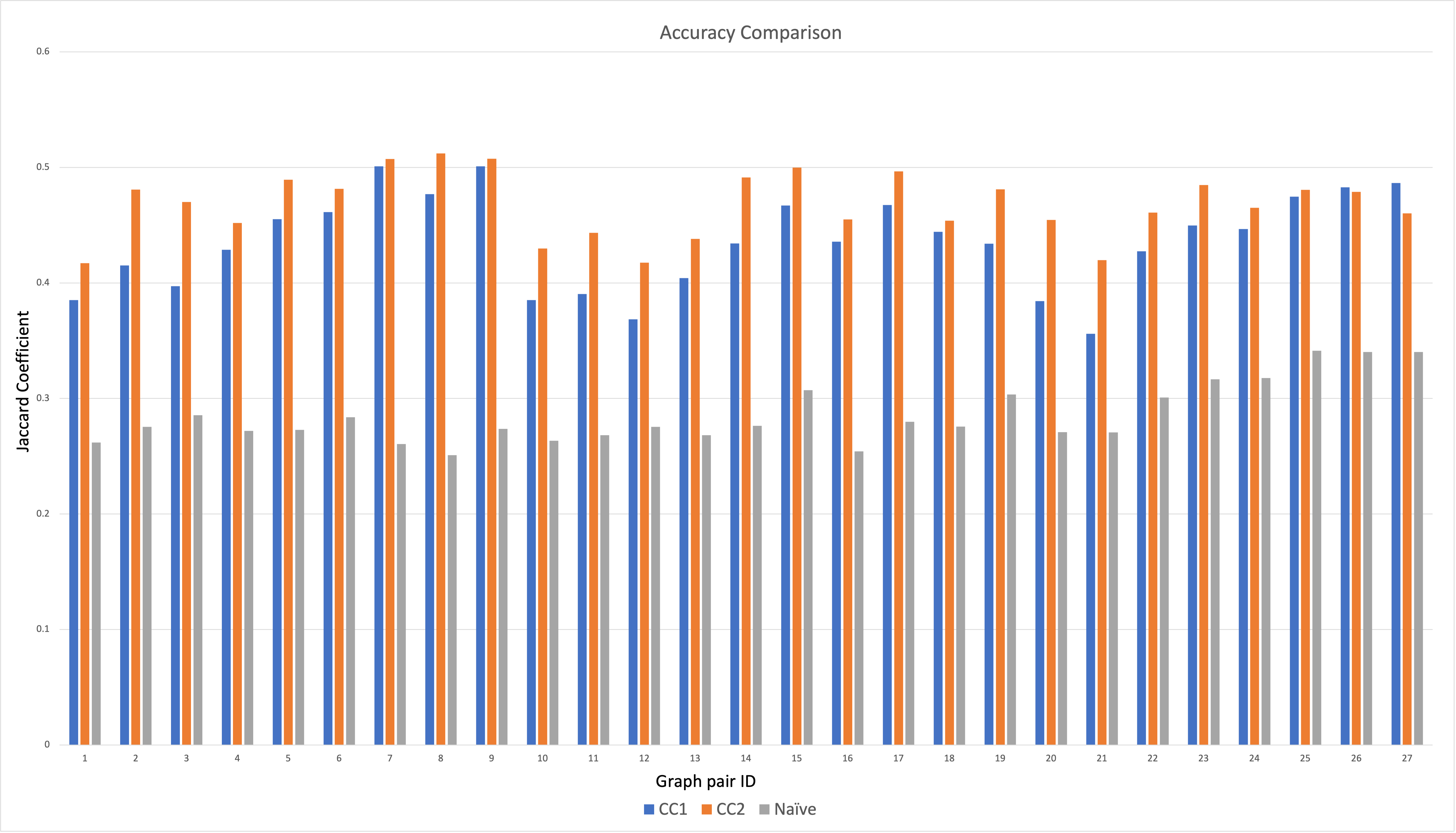}
\caption{Accuracy Comparison for Synthetic Data Set-1 (Refer Table \ref{tab:dataset1} where Graph Pair ID is $G_{ID}$)}
\label{fig:accuracy_syn_1}
\end{figure*}

Figure \ref{fig:accuracy_realworld} shows the accuracy of the algorithms on real-world-like data sets (distributions mimic real-world networks\cite{chakrabarti2005tools}). Across all data sets, both heuristics have \textbf{more than 80\% accuracy}. The accuracy of the heuristics does not go below the naive approach even for \textit{disconnected graphs}. This is significant, as the accuracy of the heuristics based on intuition is pretty good for real-world-like data sets. Although some of them show better accuracy for CC1 as compared to CC2, the efficiency improvement of CC2 is an order of magnitude better than CC1 (see Figure ~\ref{fig:performance_evaluation}).

\begin{figure*}
\centering
\includegraphics[width=0.8\textwidth]{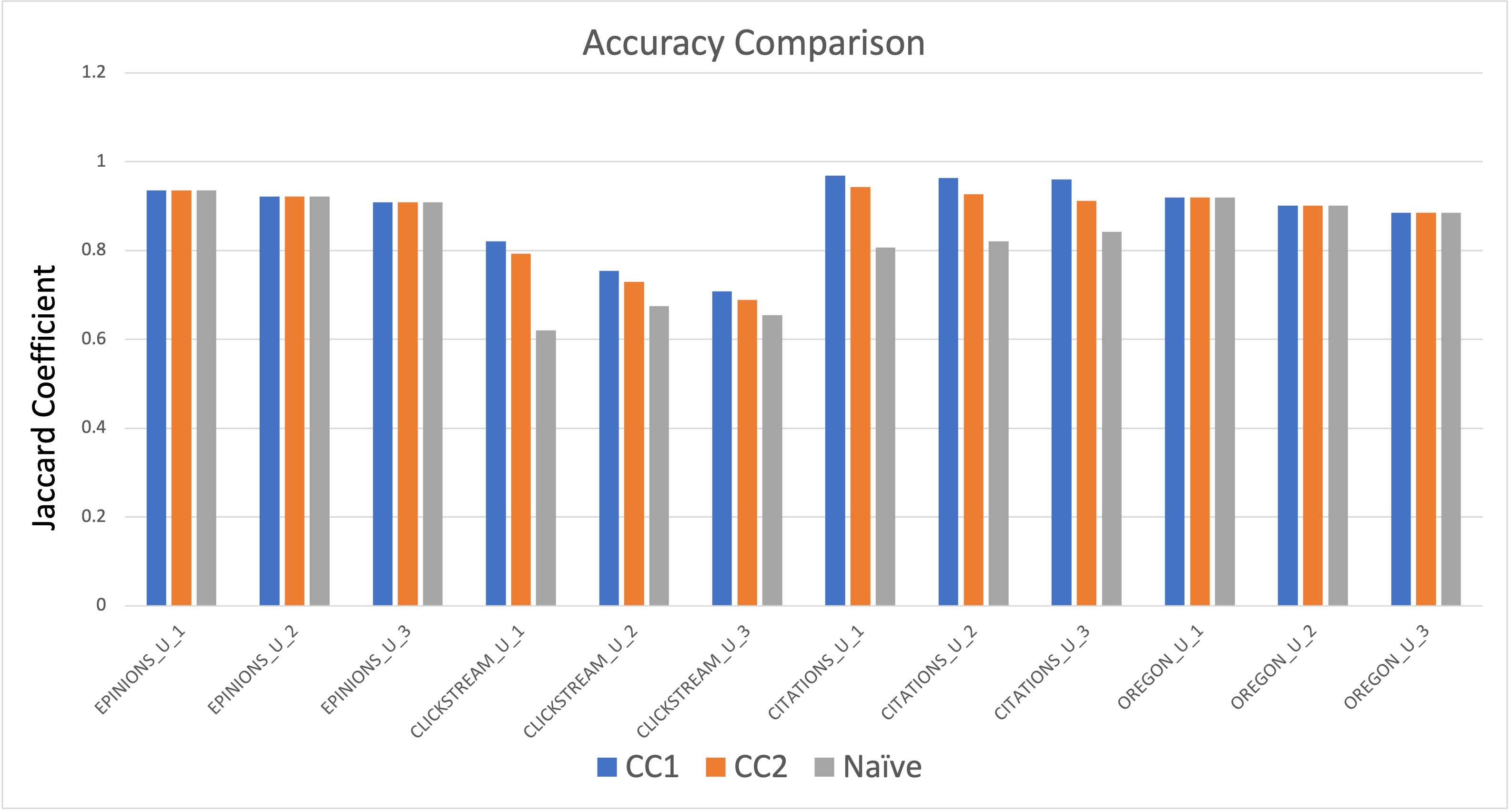}
\caption{Accuracy of the heuristics \texttt{CC1} and \texttt{CC2} for the real-world-like data sets} 
\label{fig:accuracy_realworld}
\end{figure*}

Table \ref{tab:accuracy_summary_syn} shows the mean accuracy and average percentage gain in accuracy over the naive approach for the synthetic data sets and the real-world-like data set. For all synthetic data sets, the proposed heuristics \textit{significantly} outperform the naive approach as shown. The least improvement in accuracy as compared to the naive approach is only when both layers have power-law degree distribution. Even for the real-world-like data set which has a high accuracy for the naive approach, the heuristics perform better than the naive approach.



\begin{table}[h]
\caption{Precision of \texttt{CC1} and \texttt{CC2} over Naive}
\label{tab:precision_summary_syn}
\scriptsize
\centering
\renewcommand{\arraystretch}{1.4}
\begin{tabular}{|c|c|c|c|c|}
\hline
\multirow{2}{*}{Data Set} & \multicolumn{4}{c|}{Mean Precision}  \\
\cline{2-5}
& \texttt{CC1} & \texttt{CC2} & \texttt{CC1} vs. Naive & \texttt{CC2} vs. Naive \\
\hline
Synthetic-1 & 60.98\% & 58.08\% & {\bf \textcolor{blue}{+1.52\%}} & -3.29\% \\
\hline
Synthetic-2 & 74.39\% & 72.02\% & -3.10\%  & -6.18\%  \\
\hline
Synthetic-3 & 51.99\% & 52.02\% & +0.006\% & {\bf \textcolor{blue}{+0.0718\%}}
\\
\hline
\end{tabular}%

\end{table}



\noindent \textbf{Precision.} After comparing the precision values received using the proposed heuristics against the ones from the naive approach for synthetic dataset-1, it is observed that
\texttt{CC1} has overall better precision compared to the naive approach and \texttt{CC2}. In general, it can be observed from Table \ref{tab:precision_summary_syn}, that across different types of HoMLNs, \texttt{CC1} and \texttt{CC2} give high precision values, ranging from \textbf{51\% to 74\%}. However, the improvement over naive is \textit{marginal} in most of the cases.



For synthetic data sets where one layer has normal degree distribution and the other layer has power-law degree distribution, \texttt{CC2} precision drops slightly. \texttt{CC2} was developed mainly to increase efficiency and preserve accuracy.



\begin{table}[h]
\caption{Recall Improvement of \texttt{CC1} and \texttt{CC2} over Naive}
\label{tab:recall_summary_syn}
\scriptsize
\centering
\renewcommand{\arraystretch}{1.4}
\begin{tabular}{|c|c|c|c|c|}
\hline
\multirow{2}{*}{Data Set} & \multicolumn{4}{c|}{Mean Recall}  \\
\cline{2-5}
& \texttt{CC1} & \texttt{CC2} & \texttt{CC1} vs. Naive & \texttt{CC2} vs. Naive \\
\hline
Synthetic-1 & 60.38\% & 71.01\% & +70.87\% & {\bf \textcolor{blue}{100\%}} \\
\hline
Synthetic-2 & 72.79\% & 71.38\% & {\bf \textcolor{blue}{+16.71\%}}  & +14.47\%  \\
\hline
Synthetic-3 & 48.87\% & 50.89\% & +47.55\% & {\bf \textcolor{blue}{+53.65\%}}
\\
\hline
\end{tabular}%

\end{table}

\vspace{5pt}



\noindent \textbf{Recall.} After comparing the recall values received using the proposed heuristics against the ones from the naive approach for synthetic data set 1, it is observed that \texttt{CC2} has overall better recall compared to the naive approach and \texttt{CC1}. In general, it can be observed from Table \ref{tab:recall_summary_syn} that across different types of HoMLNs, \textbf{both \texttt{CC1} and \texttt{CC2} have higher recall values than the naive approach}. Here, the heuristics achieve high recall values in the range from \textbf{49\% to 73\%}. It is observed that the proposed heuristics are able to decrease the false negatives more as compared to naive approach, which explains their marked improvement in terms of recall. However, the improvement over naive is \textit{marginal} in most of the cases. 



\begin{figure}[h]
\centering
\includegraphics[width=1.1\linewidth]{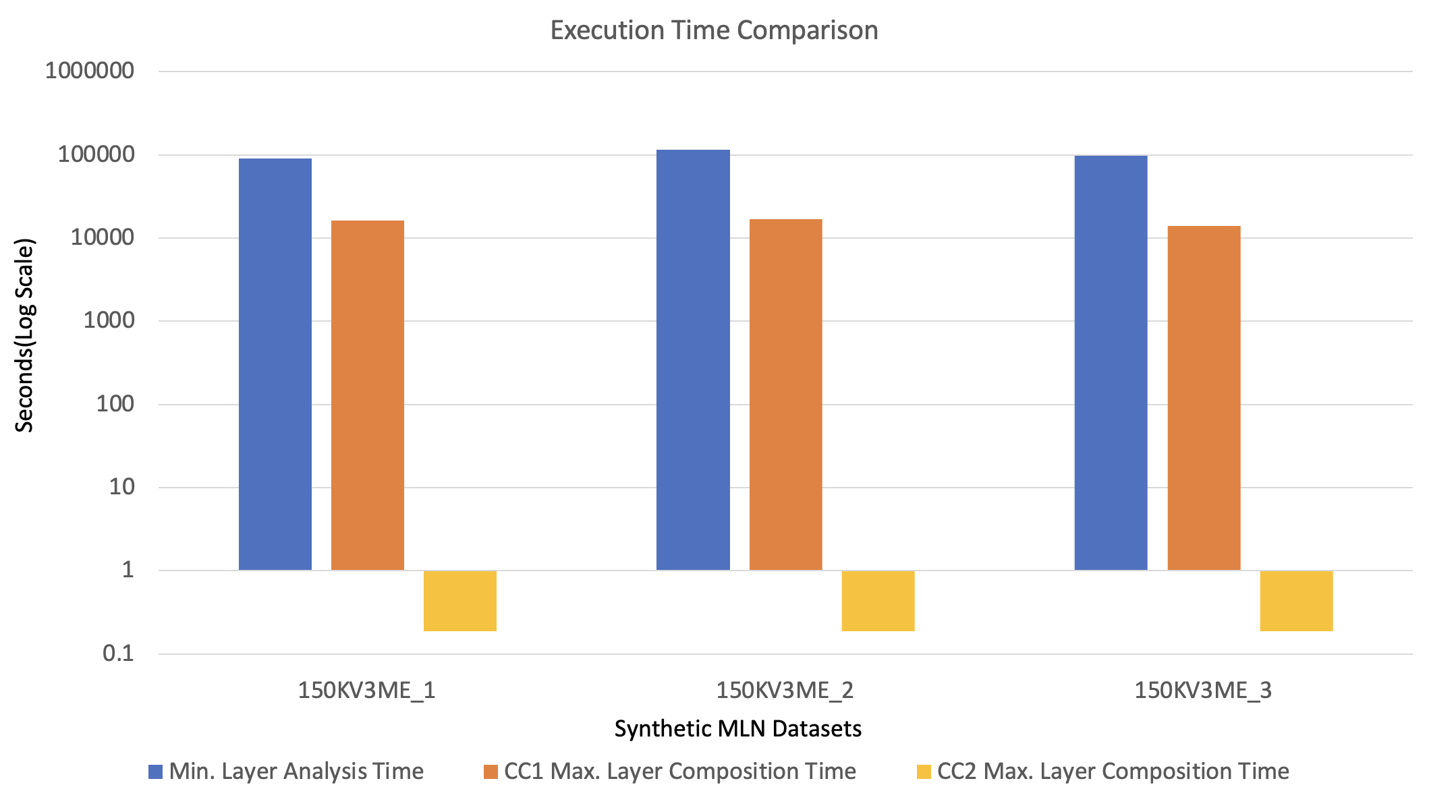}
\captionsetup{justification=centering}
\caption{Performance Comparison of CC1 and CC2 on Largest Synthetic Data Set \ref{tab:dataset1}: \\ Min. $\Psi$ Time vs. Max. \texttt{CC1} $\Theta$ Time vs. Max \texttt{CC2} $\Theta$ Time (worst case scenario)} \label{fig:performance_evaluation}
\end{figure}


\begin{table}[h]
\caption{F1-Score Improvement of \texttt{CC1} and \texttt{CC2} over Naive}
\label{tab:f1score_summary_syn}
\scriptsize
\centering
\renewcommand{\arraystretch}{1.4}
\begin{tabular}{|c|c|c|c|c|}
\hline
\multirow{2}{*}{Data Set} & \multicolumn{4}{c|}{Mean F1-Score}  \\
\cline{2-5}
& \texttt{CC1} & \texttt{CC2} & \texttt{CC1} vs. Naive & \texttt{CC2} vs. Naive \\
\hline
Synthetic-1 & 60.58\% & 63.80\% & +36.56\% & {\bf \textcolor{blue}{43.80\%}} \\
\hline
Synthetic-2 & 71.67\% & 71.09\% & {\bf \textcolor{blue}{+6.21\%}}  & +5.3\%  \\
\hline
Synthetic-3 & 50.22\% & 51.36\% & +24.71\% & {\bf \textcolor{blue}{+27.53\%}}
\\
\hline
\end{tabular}%
\end{table}



\noindent \textbf{F1-Score.} On comparing the F1-score received using the proposed heuristics against the ones from the naive approach for synthetic data set 1, it is observed that \texttt{CC2} has an overall better F1-score compared to the naive approach and \texttt{CC1}. It is established from Table \ref{tab:f1score_summary_syn} that across different types of HoMLNs, \textbf{both \texttt{CC1} and \texttt{CC2} have higher F1-scores than the naive approach}, reaching as high as \textbf{72\%}.

\vspace{5pt}

\noindent \textbf{Performance.} The ground truth graph obtained from Boolean AND operation on layers of HoMLN will always have same or less number of edges than the individual layers as an edge will appear in the ground truth graph only if it is connected between the same nodes in both the layers. The NetworkX \cite{hagberg2008exploring} package used here utilizes BFS to calculate the summation of distances from a node to every other node while calculating the normalized closeness centrality of the nodes. 
As the complexity of BFS depends on the number of vertices and edges in a graph, the ground truth will always require same or less time than the analysis time for the largest layer. 

Although the sum of the analysis time of the layers may be more than that of the ground truth, one needs to only consider the maximum analysis time of layers as they can be done in parallel.
Furthermore, the composition time is drastically less than the analysis time of any layer. Hence, \textit{the minimum analysis time for layers is compared with the maximum composition time to show the worst case scenario}. As can be seen from Figure \ref{fig:performance_evaluation} \textit{(plotted on log scale)}, the \textbf{maximum \texttt{CC1} composition time is at least 80\% faster and \texttt{CC2} is an order of magnitude faster than the minimum analysis time!}. In addition, the layer analysis is performed once and used for all subset CC node computation of n layers (which is exponential on n).



\noindent \textbf{Discussion.} Both proposed heuristics are better than the naive approach in terms of accuracy and way more efficient than ground truth computation. \texttt{CC2} is better than \texttt{CC1} if overall accuracy and efficiency are considered, but \texttt{CC1} performs better than \texttt{CC2} for high-density graphs and has better precision. The availability of multiple heuristics and their efficacy on accuracy and efficiency allows one to choose appropriate heuristics based on graph/layer characteristics.

\vspace{-8pt}

\vspace{-7pt}
\section{Conclusions and Future Work}
\label{sec:conclusions}
In this paper, the challenges of the decoupling approach for \textit{computing a global graph metric (closeness centrality) directly on a MLN }have been addressed. Two heuristics were developed to improve accuracy over the naive approach. \texttt{CC2} gives significantly higher accuracy than naive for graphs on a large number of  synthetic graphs generated with varying characteristics using RMAT and real-world-like data sets. CC2 is extremely efficient as well.





\vspace{-15pt}
\bibliographystyle{apalike}
{\small
\bibliography{./bibliography/pavelResearch,./bibliography/santraResearch,./bibliography/itlabPublications}}



\end{document}